
\documentclass[letterpaper, 10 pt, conference]{IEEEconf}  

\usepackage[english]{babel}
\usepackage{hyperref}       
\usepackage{url,mathptmx}            
\usepackage{booktabs}       
\usepackage{amsfonts}       
\usepackage{nicefrac}       

\usepackage{microtype}      
\usepackage{lipsum}
\usepackage{amsthm}
\usepackage{amssymb}
\usepackage{amsmath}
\usepackage{amsfonts}
\usepackage{xcolor}
\usepackage[ruled,vlined]{algorithm2e}
\usepackage{subfigure}
\usepackage{mathtools}
\usepackage{comment}

\newtheorem{theorem}{Theorem}

\newtheorem{proposition}[theorem]{Proposition}

\newcommand{\ie}{\emph{i.e.}}


\IEEEoverridecommandlockouts                              
\overrideIEEEmargins

\usepackage{graphics} 

\title{\LARGE \bf
EPROACH: A Population Vaccination Game for Strategic Information Design to Enable Responsible COVID Reopening
}


\author{Shutian~Liu
        and~Quanyan~Zhu
\thanks{The authors are with the Department of Electrical and Computer Engineering, Tandon School of Engineering,
        New York University, Brooklyn, NY, 11201 USA (e-mail: sl6803@nyu.edu; qz494@nyu.edu).}       
       
}

\begin{document}

\maketitle
\thispagestyle{empty}
\pagestyle{empty}

\begin{abstract}
The COVID-19 lockdowns have created a significant socioeconomic impact on our society.
In this paper, we propose a population vaccination game framework, called EPROACH, to design policies for reopenings that guarantee post-opening public health safety.
In our framework, a population of players decides whether to vaccinate or not based on the public and private information they receive.
The reopening is captured by the switching of the game state.
The insights obtained from our framework include the appropriate vaccination coverage threshold for safe-reopening and information-based methods to incentivize individual vaccination decisions.
In particular, our framework bridges the modeling of the strategic behaviors of the populations and the spreading of infectious diseases.
This integration enables finding the threshold which guarantees a disease-free epidemic steady state under the population's Nash equilibrium vaccination decisions.
The equilibrium vaccination decisions depend on the information received by the agents.
It makes the steady-state epidemic severity controllable through information.
We find out that the externalities created by reopening lead to the coordination of the rational players in the population and result in a unique Nash equilibrium.
We use numerical experiments to corroborate the results and illustrate the design of public information for responsible reopening.

\end{abstract}

\section{Introduction}
%
%
%
%

\label{sec:intro}
The COVID-19 pandemic has created a significant socioeconomic impact on our society.
Due to the massive infections, many cities all over the world have been locked down.
Restrictions due to the lockdowns, including social distancing, pausing of restaurants and schools, and mask mandates, have been in place for nearly two years.
With the advent of vaccines, many cities are considering reopening plans.
One essential issue concerning the reopening policies is to reach a reasonable vaccination coverage rate before reopening the city so that the return to normal social activities does not generate new  outbreaks.
Therefore, it is important to find out the right vaccination coverage threshold for reopening and create careful reopening policies responsible for public health safety.
Once a threshold is determined, cities can focus on addressing the questions of the accompanying post-opening policies on mask-wearing and social distancing to reduce the risks of further outbreaks.
Apart from them, a concomitant question is to find ways to incentivize or nudge individuals to vaccinate.
The decision of the proper coverage threshold and the ways to reach this coverage are indispensable for the cities to prepare for reopening.

To this end, there are several challenges to decide the appropriate vaccination coverage threshold for reopening.
First, the threshold should take into account strategic decisions of human behaviors and guarantee effective herd immunity in the population. 
Second, the post-opening policies and the individual vaccination decisions are interdependent. 
Reopening of the city signals that the pandemic is over and makes people less in need of the mandates and restrictions.
The threshold or the reopening policies should take into account the human behaviors before and after the announced policies.

Challenges also arise when we aim to  reach the vaccination coverage threshold. 
Due to many reasons, people may choose not to vaccinate.
For example, people can be influenced by the misinformation from the media and believe that vaccination is harmful.
Some may be overly risk-averse toward uncertainties and doubt the safety of the vaccines.
People may also consider that the chance of them being infected is low when others are vaccinated. 
These facts make the independent vaccination choices of individuals hard to predict or control.
One possible solution to incentivize vaccinations is to provide monetary bonuses or gifts for taking vaccines.
The monetary incentives are helpful but they cannot scale in large populations.
Another option is to create   vaccination mandates.
However, this solution can lead to strong social opposition as we have witnessed in many countries.

\begin{figure}[ht]
\centering
\includegraphics[width=0.5\textwidth]{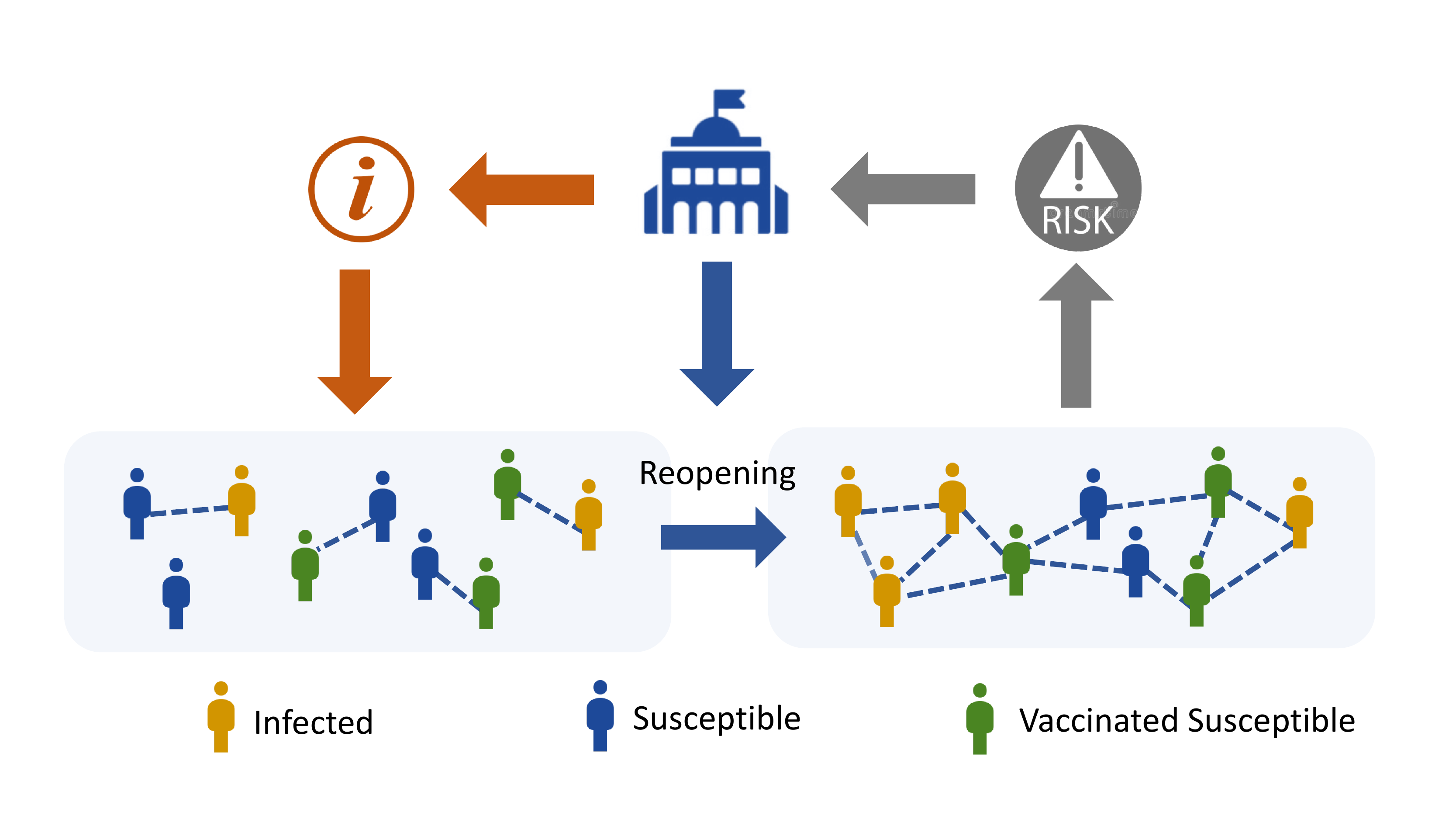}

\caption[Optional caption for list of figures]{During an epidemic, an authority determines whether or not to reopen a city.  The post-reopening virus outbreak risk is the key issue influencing this decision. The authority uses information to incentivize vaccination decisions of the population to reduce the risk.} 
\label{fig:concept}
\end{figure}

To address the above challenges, we propose a framework called the epidemiological reopening game of connected health (EPROACH).
EPROACH is a vaccination game framework integrated with the compartmental epidemic models.
We use a population game to capture the behavioral patterns resulting from individual decision-making.
In particular, we focus on vaccination decisions based on the well-being of individuals in the epidemic over a period of time. 
We incorporate public and private information into the game.
These two types of information shape individual beliefs of the severity of the epidemic and guide their vaccination decisions.
They also serve as the means to control the epidemic.
To capture the relations between the individual vaccination decisions and the effects of social policies before and after the reopening, we consider two regimes for the game.
These two regimes model the restricted and the reopened cities, respectively.
The regimes influence the population's behavior patterns through the spreading of the epidemic.
We use different compartmental epidemic models to quantify the risks of the outbreaks under various social policies adopted in the regimes.
These risks drive people's vaccination decisions and determine the accompanying social policies such as mask-wearing.
The direction of the evolution of the epidemic states eventually explains whether we arrive at herd immunity.
The integration of individual rational decision-making, private and public information, epidemic dynamics, and regime-switching in EPROACH helps us understand and quantify the outcomes of reopening and the potential risks attached. 
We propose incentivization strategies based on the Nash equilibrium (NE) of EPROACH to ensure public health safety after the reopening of cities.

We break the analysis of EPROACH into three parts.
The first part is about the externalities.
We analyze the incentives that drive people's vaccination decisions.
While the decisions seem selfish under the setting of either restricted or reopened regimes, they show a pattern of coordination when possible regime-switching can be triggered.
The incentives also generate multiple equilibria.
In the second part, we show the uniqueness of NE based on the results of the externalities.
The NE characterizes individual behaviors under the perceived infection risks in different regimes. 
The analysis of the NE provides a way to understand and predict behavioral patterns of the population under various reopening policies.
Apart from these, the relation between the information and the NE is also captured. 
The interdependence shows the impact of information on the outcomes of both vaccination coverage threshold and epidemic status.
The last part contributes to the design of information leveraging the connections between the information and the NE.
We call our method informational epidemic control since the virus can be eliminated by purely manipulating the public and private information.
Our method serves as a powerful economically friendly tool for incentivizing vaccinations since information is immaterial.
The vaccination coverage threshold required for safe-reopening is also obtained.
As a by-product, we provide advice on the vaccination cost.
These results on the designs all rely on the uncertainties of the information.
This aligns well with the fact that multiple equilibria are also generated by informational uncertainty.

We provide a brief literature review in the next section. 
The formulation of EPROACH is in Section \ref{sec: problem formulation}.
Section \ref{sec: analysis} contains the closely related elements of externalities and equilibrium analysis.
Leveraging the obtained results, we finally present the informational epidemic control method in Section \ref{sec: informational epidemic control}. 
The numerical experiments in Section \ref{sec:numerical experiments} contain further discussions on the results related to informational epidemic control.

\section{Related work}
\label{sec: related work}
One fundamental element of our framework is the compartmental epidemic model.
We use the degree-based mean-field model over complex networks \cite{pastor2015epidemic} to capture the contagion events within a networked population in an averaged sense.
Control of epidemics is one of the most investigated problems.
The recent \cite{acemoglu2021optimal} considers a balancing between virus testing and control of infections. 
The authors in \cite{chen2021optimal} consider epidemic control when there are competing viruses.
Their framework provides insights on curing policies of authorities when viruses have possibilities to mutate.
Epidemic processes are often studied together with human decision-making. 
Recently, \cite{liu2021herd} proposes a framework unifying individual decision-making processes and the epidemic dynamics to study herd behavior. 
The review \cite{huang2021game} summarizes and classifies popular game-theoretic models.
Our game-theoretic framework extends works emphasizing the interplay of human behavior and epidemics by focusing on public and private signals.
This enables information design.

Our framework builds on population games \cite{sandholm2010population}, which are suitable tools for studying behavioral patterns of groups of rational players.
By bringing public and private signals into population games, we can study herd behaviors in the setting of incomplete information.
This approach has been investigated in \cite{carlsson1993global} for understanding equilibrium selection when private signals influence the overall characteristics of the game. 
The monograph \cite{morris2001global} provides a clear and complete description of motivations, problem formulations, solution concepts, and application domains of these games.
Various settings have also been considered.
For example, \cite{frankel2003equilibrium} studies equilibrium selection under arbitrary number of players; \cite{harrison2021global} has studied the setting of submodular games; \cite{morris2004coordination,angeletos2007dynamic} have studied coordination under regime change.
Our framework makes the extension to complex networks to allow richer types of players and patterns of interactions.
Moreover, the underlying dynamical system captures players' coupled state transitions, adding additional constraints to the decision-making of players apart from the private and public signals.

\section{Formulation of EPROACH}
\label{sec: problem formulation}
Consider a population of heterogeneous players with a total mass $1$. 
These players have different degrees $d\in\mathcal{D}:=\{1,2,...,D\}$ which captures their intensities of interacting with other players.
The mass of players with degree $d$ is $m^d\in(0,1)$ with $\sum_{d\in\mathcal{D}}m^d=1$.
The population expands a complex network defined by the degree distribution $[m^d]_{d\in\mathcal{D}}$.
The average degree is denoted $\Bar{d}:=\sum_{d\in\mathcal{D}}dm^d$.
This population is exposed to the infections of a virus.
The players' probabilities of getting infected when they interact with each other over the network depend on their degree $d$.
Each individual  makes a one-time choice of whether or not to take the vaccination with effectiveness $\beta\in(0,1)$ designed for this virus at the beginning of the game.
The decisions are based on the anticipations of the players' own risks of getting infected at time $T>0$.
We call the period $(0,T)$ the period of interest.
The action set of players is $\mathcal{A}:=\{0,1\}$.
Action $i=0$ means that the player does not take the vaccination and action $i=1$ means that she takes the vaccination.
Note that the setting of binary action sets is without loss of generality. 
Our results in later sections can be generalized with more complicated notations to actions chosen from continuous sets that correspond to probabilities to vaccinate.

The regime of the game $s$ is chosen from the set $\{\text{'+','-'}\}$.
Regime $s=$'+' is the reopened state where there is no social restriction over the players.
Regime $s=$'-' is the restricted state where social activities that can cause contagions are restricted.
The restriction effect is denoted $\alpha^-=\alpha\in(0,1)$.
Correspondingly, we set $\alpha^+=1$ to denote the parameter of restriction-free social activities.
The switching of the regimes depends on two factors: the average action of players $A\in\mathbb{R}$ and the vaccination coverage threshold $\theta$, which is random with support $\mathbb{R}$.
Since $A$ captures the vaccination coverage rate, the state is '+' if $A\geq\theta$ and is '-' if $A<\theta$.

In this work, we view $\theta$ as the public signal of the game.
It is stochastic and drawn by nature.
While the distribution of $\theta$ is common knowledge, players perceive $\theta$ heterogeneously.
This heterogeneity is captured by the information type space $\mathcal{K}:=\{1,2,...,K\}$.
A player with information type $k\in\mathcal{K}$ receives a signal $x_k=\theta+\xi_k$, where $\xi_k$ is the perception bias of type $k$ and it is modeled as a randomness.
We call $x_k$ the private signal of type $k$ since the distribution of $\xi_k$ is only observable to players with information type $k$.  
The private signal is essential in shaping players' behaviors since it is correlated with the regime of the game $\theta$.

We assume that players with the same degree $d\in\mathcal{D}$ and the same information type $k\in\mathcal{K}$ are statistically equivalent.
This assumption is due to the following facts.
Firstly, the payoffs of the players depend on their risks of getting infected. 
This risk depends on a player's  intensity of interactions and is captured by the degree of a player.
Secondly, players' decisions are made based on their anticipations of the regimes.
This anticipation varies for players with different types, since beliefs of regimes are formed according to the correlations of the public signal $\theta$ and the private signals $x_k, k\in\mathcal{K}$.
Let $m^{d,k}_i$ denote the proportion of players with degree $d\in\mathcal{D}$ and type $k\in\mathcal{K}$ who select action $i\in\mathcal{A}$.
Let $m^d_i=\sum_{k\in\mathcal{K}}m^{d,k}_i$ and $m^k_i=\sum_{d\in\mathcal{D}}m^{d,k}_i$.
The average action can be expressed as $A=\sum_{d\in\mathcal{D}}\sum_{k\in\mathcal{K}}m^{d,k}_1=\sum_{d\in\mathcal{D}}m^d_1=\sum_{k\in\mathcal{K}}m^k_1$.
Note that when the subscript $i$ is dropped, we mean the players who play either $i=1$ or $i=0$.

Under our statistical equivalence assumption, we specify the payoffs of players under different regimes.
Let $c\in(0,1)$ denote the relative cost of vaccination.
Let $r\in\mathbb{R}$ denote the morbidity risk of the virus.
When the regime is $s=$'-', \ie, $A< \theta$,  a player  with degree $d$ and type $k$ observes payoff $u^{d,k,-}_0=-rI^{d,k,-}_{0}(T)$ under action $0$ and payoff $u^{d,k,-}_1=-c-rI^{d,k,-}_{1}(T)$ under action $1$, where $I^{d,k,-}_{i}(T)$ denotes the infection probability of her at the end of the period of interest when she plays action $i$ under regime $s=$'-'.
The infection probability $I^{d,k,-}_i$ in the population is governed by the epidemic process with recovery rate $\gamma\in(0,1)$ and contagion rate $\lambda\in(0,1)$ as follows:
\small
\begin{equation}
    \dot{I}^{d,k,-}_i(t)=-\gamma I^{d,k,-}_i(t)+\lambda_i(1-I^{d,k,-}_i(t))\alpha^- d \Theta^-(t),
    \label{eq:epidemic dynamics restricted}
\end{equation}
\normalsize
where $\lambda_i=\lambda$ if $i=0$ and $\lambda_i=\beta\lambda$ if $i=1$.
In (\ref{eq:epidemic dynamics restricted}), $\Theta^-(t):=\Bar{d}^{-1}\sum_{d\in\mathcal{D}}\sum_{k\in\mathcal{K}}\sum_{i\in\mathcal{A}}dm^{d,k}_iI^{d,k,-}_i(t)$ denotes the probability that a link is connected to an infected player at time $t$.
For the consistency of $\Theta^-(t)$, see \cite{pastor2001epidemic}. 
When the regime is $s=$'+', \ie, $A\geq\theta$, a player with degree $d$ and type $k$ observes payoff $u^{d,k,+}_0=-rI^{d,k,+}_{0}(T)+g^d$ under action $0$ and payoff $u^{d,k,+}_1=-c-rI^{d,k,+}_{1}(T)+g^d$ under action $1$, where $g^d$ denotes the utility gain a degree $d$ player generates after reopening.
Note that this utility gain can be related to psychological issues caused by quarantines and isolation.
Its dependence on $d$ characterizes the differences in the gains received by players having different degrees of social connections.
The term $I^{d,k,+}_i$ is the counterpart of $I^{d,k,-}_i$ under state $s=$'+'.
The corresponding epidemic process is
\small
\begin{equation}
    \dot{I}^{d,k,+}_i(t)=-\gamma I^{d,k,+}_i(t)+\lambda_i(1-I^{d,k,+}_i(t))\alpha^+ d \Theta^+(t),
    \label{eq:epidemic dynamics reopened}
\end{equation}
\normalsize
where $\Theta^+(t):=\Bar{d}^{-1}\sum_{d\in\mathcal{D}}\sum_{k\in\mathcal{K}}\sum_{i\in\mathcal{A}}dm^{d,k}_iI^{d,k,+}_i(t)$.
The initial infection probabilities are assumed to be increasing in players' degrees and independent of the types.
This is based on the practical concern that a player with a higher degree should have a higher initial infection probability.


\section{Equilibrium Analysis}
\label{sec: analysis}

In this section, we first analyze the structural property of (\ref{eq:epidemic dynamics restricted}) and (\ref{eq:epidemic dynamics reopened}).
The structural property helps us understand the incentives of the players.
This makes (\ref{eq:epidemic dynamics restricted}) and (\ref{eq:epidemic dynamics reopened}) more convenient to approach than they appear.
Then, we prove the uniqueness of the equilibrium based on these incentives.

We make the following assumptions on the distributions.
Assume that the vaccination coverage threshold $\theta$ follows a normal distribution with mean $\mu$ and variance $\frac{1}{\sigma^2}$, \ie, $\theta\sim \mathcal{N}(\mu,\frac{1}{\sigma^2})$. 
We set the nominal threshold represented by $\mu$ to be in $(0,1)$.
Among all the possible realizations of $\theta$ from the support $(-\infty, +\infty)$, the negative values mean that reopening does not rely on vaccinations; realizations that are higher than $1$ mean that no matter what the vaccination rate is, the city is always restricted.
The perception bias of type $k$ is assumed to have precision $\sigma^2_d$, \ie, $\xi_d\sim \mathcal{N}(0,\frac{1}{\sigma_d^2})$.
The results in this section also hold with general distributions \cite{morris2001global}.

\subsection{Incentive Analysis}
\label{sec: analysis: incentives}
Externality plays an important role in game-theoretic situations.
Network effects also generates externalities \cite{candogan2012optimal}.
Based on different communication structures, the externalities can be either global or local.
The systems of coupled differential equations (\ref{eq:epidemic dynamics restricted}) and (\ref{eq:epidemic dynamics reopened}) describing the epidemic evolution over the complex network are the sources of externalities of the players in our population game.
While the systems of equations (\ref{eq:epidemic dynamics restricted}) and (\ref{eq:epidemic dynamics reopened}) capture the local interactions of players in an averaged sense, the externalities of players show global patterns of incentives.

The following result shows that when social distancing is effective enough, the decisions in the game with potential regime switching are strategic complements. 

\begin{proposition}
\label{prop:supermodular game regime change}
Assume that the initial conditions of (\ref{eq:epidemic dynamics restricted}) and (\ref{eq:epidemic dynamics reopened}) are weakly increasing in players' degrees $d$ and independent of the types $k$.
Then, if the condition $\Theta^+(t)\geq \alpha \Theta^-(t)$ is satisfied for $t\in[0,T]$, the decisions in the population game with the possibility of regime-switching are strategic complements.
\end{proposition}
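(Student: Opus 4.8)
The plan is to (a) reduce the two epidemic systems to a type-independent form, (b) rewrite ``strategic complements'' as monotonicity of each player's incremental payoff under profile shifts toward ``vaccinate,'' and (c) obtain that monotonicity from a comparison principle for \eqref{eq:epidemic dynamics restricted}--\eqref{eq:epidemic dynamics reopened} under the standing hypothesis. For (a): the right-hand sides of \eqref{eq:epidemic dynamics restricted} and \eqref{eq:epidemic dynamics reopened}, and by assumption the initial data, do not depend on the information type $k$, so $I^{d,k,s}_i(t)=:I^{d,s}_i(t)$, and $\Theta^s(t)=\bar{d}^{-1}\sum_d d\sum_i m^d_i I^{d,s}_i(t)$ depends on the profile only through the degree-marginals $m^d_i$; a routine invariance check also confirms $I^{d,s}_i(\cdot)\in[0,1]$, which is needed to invoke comparison arguments.

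For (b): the post-reopening gain $g^d$ cancels in the incremental payoff, leaving $\delta^{d,s}:=u^{d,k,s}_1-u^{d,k,s}_0=-c+r\big(I^{d,s}_0(T)-I^{d,s}_1(T)\big)$, and a type-$k$ player who assigns posterior weight $P_k(A):=\Pr(\theta\le A\mid x_k)$ to the reopened regime has expected incremental payoff $B^{d,k}=P_k(A)\,\delta^{d,+}+(1-P_k(A))\,\delta^{d,-}$, with $P_k(\cdot)$ nondecreasing since it is the posterior CDF of $\theta$ given $x_k$ evaluated at $A$. The game has strategic complements when each $B^{d,k}$ is nondecreasing under a shift of the profile toward ``vaccinate''; such a shift raises $A$ (hence $P_k(A)$) and, by standard comparison arguments for the monotone epidemic system, lowers every $\Theta^s(\cdot)$ and hence every $I^{d,s}_i(T)$. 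Thus the claim reduces to the single inequality $\delta^{d,+}\ge\delta^{d,-}$ if one treats the epidemic trajectory as an exogenous ``fundamental'' in the global-games sense, and otherwise additionally requires that this cross-regime term outweigh the within-regime sensitivity of $\delta^{d,s}$ to the profile --- which is precisely where ``social distancing effective enough'' (small restriction effect $\alpha$) enters.

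For (c): writing the forcing as $f^s(t):=\alpha^s d\,\Theta^s(t)$, the hypothesis gives $f^+(t)\ge f^-(t)\ge 0$ on $[0,T]$ (as $\alpha^+=1$, $\alpha^-=\alpha$), so the forcing term $\lambda_i(1-I)f^s(t)$ of \eqref{eq:epidemic dynamics reopened} dominates that of \eqref{eq:epidemic dynamics restricted} pointwise for every $I\in[0,1]$ and every $i$; with common initial data, the scalar comparison principle yields $I^{d,+}_i(t)\ge I^{d,-}_i(t)$ for $i=0,1$. To pass from these levels to the gaps $w^{d,s}:=I^{d,s}_0-I^{d,s}_1$ that enter $\delta^{d,s}$, I would write the linear ODEs satisfied by $w^{d,s}$ and by $v^d_i:=I^{d,+}_i-I^{d,-}_i$ (each starting from $0$), pass to their integral representations, and compare $v^d_0(T)$ with $v^d_1(T)$ using $\lambda_0=\lambda>\beta\lambda=\lambda_1$ together with $I^{d,-}_0\ge I^{d,-}_1$, concluding $w^{d,+}(T)\ge w^{d,-}(T)\ge 0$ and therefore $\delta^{d,+}\ge\delta^{d,-}$.

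The crux is Step (c): the clean monotonicity from the comparison principle is for the infection \emph{levels}, whereas the payoffs depend on the \emph{gap} between the vaccinated and unvaccinated infection probabilities, and in the representation of $v^d_i(T)$ the relaxation rate $\gamma+\lambda_i f^+$, the drive coefficient $\lambda_i(f^+-f^-)$, and the factor $1-I^{d,-}_i$ are not jointly monotone in $i$, so making them pull in the right direction is where the structure $\Theta^+(t)\ge\alpha\Theta^-(t)$ must be used in earnest; in the profile-dependent reading one must also check there that the cross-regime complementarity beats the within-regime substitutability. Everything else is bookkeeping with convex combinations and monotone CDFs.
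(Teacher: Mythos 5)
Your overall target is the same as the paper's: reduce ``strategic complements'' to the increasing-differences inequality $u^{d,k,+}_1-u^{d,k,+}_0 \geq u^{d,k,-}_1-u^{d,k,-}_0$ (equivalently $\delta^{d,+}\geq\delta^{d,-}$, with $g^d$ cancelling), and obtain it from a comparison argument for the two epidemic systems. But your proposal does not actually close the decisive step, and you say so yourself: you never establish $w^{d,+}(T)\geq w^{d,-}(T)$. Your route --- first prove level monotonicity $I^{d,+}_i\geq I^{d,-}_i$, then try to recover the ordering of the gaps by comparing $v^d_0(T)$ with $v^d_1(T)$ --- is a detour that works against you. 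Writing the bracket in the gap equation as $(1-I^{d,s}_0)-\beta(1-I^{d,s}_1)=(1-\beta)(1-I^{d,s}_1)-w^{d,s}$, the level inequality $I^{d,+}_1\geq I^{d,-}_1$ makes the factor $(1-I^{d,+}_1)$ \emph{smaller} than $(1-I^{d,-}_1)$, i.e.\ it pushes the ``+'' forcing down, not up; so the levels-then-gaps plan does not assemble into the desired inequality, and the final paragraph of your proposal is an admission of exactly this. A plan that reduces the claim to an unverified inequality is a gap, not a proof.

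The paper instead applies the comparison theorem \emph{directly to the gap}: it writes the ODE satisfied by $\Delta I^{d,k,s}:=I^{d,k,s}_0-I^{d,k,s}_1$,
\begin{equation*}
\Delta\dot I^{d,k,s}=-\gamma\,\Delta I^{d,k,s}+\alpha^s d\lambda\,\Theta^s(t)\bigl[(1-I^{d,k,s}_0)-\beta(1-I^{d,k,s}_1)\bigr],
\end{equation*}
observes that the hypothesis $\Theta^+(t)\geq\alpha\Theta^-(t)$ orders the two right-hand sides, and invokes the differential-inequality comparison result of Brauer (with equal initial gaps, from the common initial data) to get $\Delta I^{d,k,+}(t)\geq\Delta I^{d,k,-}(t)$, hence $\delta^{d,+}\geq\delta^{d,-}$ after multiplying by $r$ and subtracting $c$. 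You should also note that the proposition, as the paper proves it, is exactly the cross-regime increasing-differences statement; your added worry about the within-regime sensitivity of $\delta^{d,s}$ to the profile (the ``substitutes'' force) is the content of Proposition~\ref{prop:submodular game fixed state} and is not something the paper's proof of Proposition~\ref{prop:supermodular game regime change} weighs against the cross-regime term. So to repair your argument: drop the level comparison entirely, work with the gap ODEs, and apply the scalar comparison principle to them under the stated hypothesis.
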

\begin{proof}
We show the strategic complements property by showing that the payoff gain of switching to action $1$ from action $0$ under state $s=$'+' is greater than that under state $s=$'-', \ie, $u^{d,k,+}_1-u^{d,k,+}_0 \geq u^{d,k,-}_1-u^{d,k,-}_0$, for all $d\in\mathcal{D}$ and $k\in\mathcal{K}$.
We obtain from (\ref{eq:epidemic dynamics restricted}) and (\ref{eq:epidemic dynamics reopened}) that
\small
\begin{equation}
    \begin{aligned}
        &\ \ u^{d,k,+}_1-u^{d,k,+}_0 \\
        & 
        =r(I^{d,k,+}_0(T)-I^{d,k,+}_1(T))-c \\
        &=r
        [ I^{d,k,+}_0(0)-I^{d,k,+}_1(0) + \int_0^{T} -\gamma(I^{d,k,+}_0(t)-I^{d,k,+}_1(t)) dt  \\ 
         &   \ \  +\lambda d \int_0^{T} \Theta^+(t)[(1-I^{d,k,+}_0(t))-\beta(1-I^{d,k,+}_1(t))]dt  ] -c .\\
        \label{eq:proof:u^+_1 - u^d_0}
    \end{aligned}
\end{equation}
\normalsize
Similarly, we obtain
\small
\begin{equation}
    \begin{aligned}
       & \ \ u^{d,k,-}_1-u^{d,k,-}_0 \\
        &= r
        [ I^{d,k,-}_0(0)-I^{d,k,-}_1(0) + \int_0^{T} -\gamma(I^{d,k,-}_0(t)-I^{d,k,-}_1(t)) dt  \\ 
         &   \ \  +\alpha \lambda d \int_0^{T} \Theta^+(t)[(1-I^{d,k,-}_0(t))-\beta(1-I^{d,k,-}_1(t))]dt  ] -c.\\
        \label{eq:proof:u^+_1 - u^d_0}
    \end{aligned}
\end{equation}
\normalsize
To compare the values of $u^{d,k,+}_1-u^{d,k,+}_0$ and $u^{d,k,-}_1-u^{d,k,-}_0$, consider the following difference of differential equations:
\small
\begin{equation*}
\begin{aligned}
        \Delta \dot{I}^{d,k,+} &=\dot{I}^{d,k,+}_0(t)-\dot{I}^{d,k,+}_1(t) \\
        &=-\gamma(I^{d,k,+}_0(t)-I^{d,k,+}_1(t)) \\
        & \ \ +  d \lambda \Theta^+(t) \left[(1-I^{d,k,+}_0(t))-\beta(1-I^{d,k,+}_1(t)) \right] ,
        \end{aligned}
\end{equation*}
\normalsize
and
\small
\begin{equation*}
    \begin{aligned}
        \Delta \dot{I}^{d,k,-} &=\dot{I}^{d,k,-}_0(t)-\dot{I}^{d,k,-}_1(t) \\
    &=-\gamma(I^{d,k,-}_0(t)-I^{d,k,-}_1(t)) \\
      &\ \ + \alpha d \lambda \Theta^-(t) \left[(1-I^{d,k,-}_0(t))-\beta(1-I^{d,k,-}_1(t)) \right].\\
    \end{aligned}
\end{equation*}
\normalsize
Under the condition $\Theta^+(t)\geq \alpha \Theta^-(t)$ for $t\in[0,T]$, $\Delta \dot{I}^{d,k,+}$ is an upper bound of $\Delta \dot{I}^{d,k,-}$.
According to \cite{brauer1963bounds}, the solutions of differential equations (\ref{eq:epidemic dynamics restricted}) and (\ref{eq:epidemic dynamics reopened}) satisfy $I^{d,k,+}_0(t)-I^{d,k,+}_1(t) \geq I^{d,k,-}_0(t)-I^{d,k,-}_1(t)$.
This implies that $u^{d,k,+}_1-u^{d,k,+}_0 \geq u^{d,k,-}_1-u^{d,k,-}_0$, which completes the proof.
\qed
\end{proof}
Note that the condition $\Theta^+(t)\geq \alpha \Theta^-(t)$ means that the effect of social restriction policy $\alpha$ makes the probability of linking to an infected player in the restricted state $s=$"-" lower than in the reopened state $s=$"+".

Strategic complementarity can be considered as a coordination of individual decisions.
It indicates that the public health connected through the complex network is coordinated.
A game is supermodular when the decisions of the players in this game are strategic complements \cite{topkis1979equilibrium}.
In a supermodular game, there are often multiple equilibria and simple evolutionary dynamics converge monotonically to these equilibria. 
Proposition \ref{prop:supermodular game regime change} provides the foundations for both analyzing the game from the perspective of players' incentives and solving the game using computational methods. 
One interesting point about Proposition \ref{prop:supermodular game regime change} is that the property of strategic complements does not depend on the utility gain $g^d$ of switching from $s=$'-' to $s=$'+'.
This utility gain originally makes the payoffs under the state $s=$'+' more attractive for the players.
However, it is the effectiveness of social policies, \ie, $\alpha$, which shapes players' decisions to become strategic complements.

Proposition \ref{prop:supermodular game regime change} is not a stand-alone result.
By following the similar reasoning as in the proof of Proposition \ref{prop:supermodular game regime change}, we arrive at the next result.

\begin{proposition}
\label{prop:submodular game fixed state}
Assume that the initial conditions of (\ref{eq:epidemic dynamics restricted}) and (\ref{eq:epidemic dynamics reopened}) are weakly increasing in players' degrees $d$ and independent of the types $k$.
If the switching of the regimes is absent, or, in other words, when the state is always either $s=$'+' or $s=$'-', the decisions in the population game are strategic substitutes.
\end{proposition}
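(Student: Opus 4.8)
The plan is to mirror the proof of Proposition~\ref{prop:supermodular game regime change}. Because the regime is now frozen---permanently $s=$'+' or permanently $s=$'-'---the only channel through which the rest of the population influences a given player's payoff is the link-to-infected probability $\Theta^{s}(\cdot)$. For every degree $d$ and type $k$ the gain of switching from action $0$ to action $1$ is $u^{d,k,s}_1-u^{d,k,s}_0 = r\bigl(I^{d,k,s}_0(T)-I^{d,k,s}_1(T)\bigr)-c$, and the reopening utility $g^d$ (present when $s=$'+') cancels in the difference, exactly as in Proposition~\ref{prop:supermodular game regime change}. Writing $A$ for the aggregate vaccination mass, the strategic-substitutes property is therefore equivalent to the claim that $\Delta I^{d,k,s}(T):=I^{d,k,s}_0(T)-I^{d,k,s}_1(T)$ is weakly decreasing as the vaccination profile increases coordinatewise (in particular as $A$ rises) for every $(d,k)$. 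I would establish this by chaining two monotonicity statements: (A) a larger vaccination profile produces a pointwise smaller $\Theta^{s}(\cdot)$ on $[0,T]$; and (B) a pointwise smaller $\Theta^{s}(\cdot)$ produces a smaller infection-risk gap $\Delta I^{d,k,s}(T)$.

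For Step~(A), I would first record that $I^{d,k,s}_1(t)\le I^{d,k,s}_0(t)$ for all $t$: the two coordinates solve (\ref{eq:epidemic dynamics restricted})/(\ref{eq:epidemic dynamics reopened}) with the \emph{same}, type-independent, initial condition and with $\beta\lambda<\lambda$, so the comparison theorem of~\cite{brauer1963bounds} invoked in Proposition~\ref{prop:supermodular game regime change} gives the ordering. Hence replacing a profile $m$ by a profile $m'$ that merely shifts mass from action $0$ to action $1$ inside each $(d,k)$ class can only lower the weighted average $\Theta^{s}=\bar{d}^{-1}\sum_{d,k,i}d\,m^{d,k}_i I^{d,k,s}_i$ \emph{at a fixed driving trajectory}. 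To close the feedback loop I would regard $\Theta^{s}(\cdot)$ as the fixed point of the operator $\Psi$ that sends a candidate driving to the $\Theta^{s}$ recomputed after integrating the (then decoupled) compartmental ODEs; $\Psi$ is monotone in its input because each $I^{d,k,s}_i$ is increasing in its driving (here $1-I^{d,k,s}_i\ge 0$), and the observation above shows the $m'$-operator lies pointwise below the $m$-operator. Iterating from the zero trajectory then gives $\Theta^{s}(\cdot;m')\le\Theta^{s}(\cdot;m)$, and consequently $I^{d,k,s}_i(\cdot;m')\le I^{d,k,s}_i(\cdot;m)$ for all $i,d,k$.

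Step~(B) is the crux. As in the proof of Proposition~\ref{prop:supermodular game regime change}, $\Delta I^{d,k,s}$ obeys $\Delta\dot I^{d,k,s} = -\bigl(\gamma+\lambda\alpha^{s} d\,\Theta^{s}\bigr)\Delta I^{d,k,s} + (1-\beta)\lambda\alpha^{s} d\,\Theta^{s}\bigl(1-I^{d,k,s}_1\bigr)$ with $\Delta I^{d,k,s}(0)=0$; that is, it is the zero-initial-data response to the forcing $\Theta^{s}(\cdot)$ modulated by the susceptibility factor $1-I^{d,k,s}_1$. In the low-prevalence regime relevant to safe reopening---small initial infections, so that the $I^{d,k,s}_i(t)$ and $\Theta^{s}(t)$ stay small on $[0,T]$---the damping is $\approx\gamma$ and $1-I^{d,k,s}_1\approx 1$, whence $\Delta I^{d,k,s}(T)\approx(1-\beta)\lambda\alpha^{s} d\int_0^T e^{-\gamma(T-\tau)}\Theta^{s}(\tau)\,d\tau$, which is manifestly increasing in $\Theta^{s}(\cdot)$; combined with Step~(A) and applied uniformly over $(d,k)$ this yields the proposition. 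The main obstacle is making Step~(B) rigorous beyond low prevalence: a smaller $\Theta^{s}$ weakens the forcing but, through the reduced prevalence it induces, \emph{enlarges} the factor $1-I^{d,k,s}_1$, so the two effects on $\Delta I^{d,k,s}(T)$ pull in opposite directions and the plain differential-inequality comparison of Proposition~\ref{prop:supermodular game regime change} does not by itself close. I would handle this either by restricting to the prevalence range in which the forcing effect dominates---precisely the range around the disease-free equilibrium that the later analysis targets---or by running a simultaneous quasi-monotone comparison for the pair $\bigl(\Delta I^{d,k,s},\,1-I^{d,k,s}_1\bigr)$ together with the a priori bound $\Delta I^{d,k,s}\le(1-\beta)\bigl(1-I^{d,k,s}_1\bigr)$, which caps the cross term.
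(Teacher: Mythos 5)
Your overall strategy is exactly the one the paper intends: the paper offers no explicit proof of this proposition, stating only that it follows ``by similar reasoning'' to Proposition~\ref{prop:supermodular game regime change}, and your skeleton---reduce strategic substitutability to the monotonicity of the infection-risk gap $\Delta I^{d,k,s}(T)=I^{d,k,s}_0(T)-I^{d,k,s}_1(T)$ in the vaccination profile, then chain (A) ``more vaccination lowers $\Theta^{s}$'' with (B) ``lower $\Theta^{s}$ lowers $\Delta I^{d,k,s}(T)$''---is the natural adaptation of that argument. Your observation that $g^d$ cancels and that the aggregate enters only through $\Theta^{s}$ is correct, and Step~(A), via the monotone fixed-point operator on the driving trajectory, is sound in outline (it is essentially the cooperative-system comparison that also underlies the ordering $I^{d,k,s}_1\le I^{d,k,s}_0$).

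The genuine gap is the one you yourself flag in Step~(B), and it is not closed by either of your proposed remedies as written. Rewriting the gap dynamics as $\Delta\dot I^{d,k,s}=-(\gamma+\lambda\alpha^{s}d\,\Theta^{s})\Delta I^{d,k,s}+(1-\beta)\lambda\alpha^{s}d\,\Theta^{s}(1-I^{d,k,s}_1)$ makes plain that a pointwise decrease of $\Theta^{s}$ shrinks the forcing but simultaneously raises the susceptibility factor $1-I^{d,k,s}_1$ (and weakens the damping), so a one-sided differential comparison of the type used for Proposition~\ref{prop:supermodular game regime change} does not apply: the right-hand side is not monotone in $\Theta^{s}$ once the induced change in $I^{d,k,s}_1$ is accounted for. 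Restricting to a low-prevalence window would prove a weaker statement than the proposition (which asserts strategic substitutes unconditionally, in contrast to Proposition~\ref{prop:supermodular game regime change} which does carry an explicit hypothesis $\Theta^+\ge\alpha\Theta^-$), and the ``simultaneous quasi-monotone comparison'' for the pair $(\Delta I^{d,k,s},\,1-I^{d,k,s}_1)$ is only named, not verified---the a priori bound $\Delta I^{d,k,s}\le(1-\beta)(1-I^{d,k,s}_1)$ controls the sign of the bracket but not the direction of the competition between the two effects. To be fair, the identical difficulty is present, unacknowledged, in the paper's own proof of Proposition~\ref{prop:supermodular game regime change}, where $\Delta\dot I^{d,k,+}$ is declared an upper bound of $\Delta\dot I^{d,k,-}$ while silently treating the bracket terms $(1-I^{d,k,s}_0)-\beta(1-I^{d,k,s}_1)$ as comparable across regimes; so your proposal is no less rigorous than the source, but as a self-contained proof it does not yet establish the claim.
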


The conflict of incentives shown in Proposition \ref{prop:supermodular game regime change} and Proposition \ref{prop:submodular game fixed state} arises from the possibility of the regime switching.
On the one hand, players' incentives to vaccinate exhibit rationality in the scenario of Proposition \ref{prop:submodular game fixed state}.
This rationality nudges individuals to protect themselves from getting infected without taking others' health conditions into consideration.
In particular, when the environment is more infectious due to the fact that less people decide to vaccinate, players become more eager to vaccinate to get protection from potential infections; when the environment is safer because of an ample amount of vaccinated, players become less motivated to vaccinate since they have less probability to get infected in daily lives.
On the other hand, players' incentives to vaccinate show a pattern of coordination in the scenario of Proposition \ref{prop:supermodular game regime change}.
The effect of this coordination is the congruence with the anticipations of the game.
In other words, the players tend to mimic others' decision on vaccination since unified actions either meet their expectation of switching to the reopened regime or decreases unnecessary losses except for those essential in maintaining the rationalities of themselves.

The coordination phenomenon found in Proposition \ref{prop:supermodular game regime change} implies a convenient structure of the incentives of the players when the outcome is captured by the Nash equilibrium.
Our equilibrium analysis in the next section leverages this structural property.

\subsection{Equilibrium Analysis}
\label{sec: analysis: NE}

Strategic complementarities generate multiple equilibria \cite{topkis1979equilibrium}.
The next result shows that a unique Nash equilibrium is selected when the public and private signals in our game satisfy certain conditions.
This equilibrium describes the vaccination decisions of the players when they observe their private signals.

\begin{proposition}
\label{prop:equilibrium of global game}
The global vaccination game admits a unique equilibrium in switching strategies if 
\begin{equation}
    \sum_{k\in\mathcal{K}}\frac{m^k}{\sigma_k}\leq \frac{\sqrt{2\pi}}{\sigma^2}.
    \label{eq:condition on uniqueness of equilibrium}
\end{equation}
\end{proposition}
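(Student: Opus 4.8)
The plan is to run a global-games argument in the spirit of Morris--Shin: Proposition~\ref{prop:supermodular game regime change} makes the regime-switching game supermodular, so it suffices to look for equilibria in monotone (switching) strategies, and I will compress the equilibrium conditions into a single scalar fixed-point equation whose solution map is a contraction exactly when (\ref{eq:condition on uniqueness of equilibrium}) holds.

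I would first describe a switching profile by a cutoff $\hat{x}^{d,k}$ for each degree $d$ and type $k$, a degree-$d$, type-$k$ player vaccinating iff $x_k\le \hat{x}^{d,k}$; the direction of the monotonicity is dictated by Proposition~\ref{prop:supermodular game regime change}, since a larger private signal raises the posterior on $\theta$, makes the reopened regime $s=$'+' less likely, and the payoff gain of vaccinating is weakly larger under $s=$'+' than under $s=$'-'. By a law-of-large-numbers argument over the continuum of players, given the realized threshold $\theta$ the realized average action equals the deterministic quantity $A(\theta)=\sum_{d,k}m^{d,k}\Phi\!\bigl(\sigma_k(\hat{x}^{d,k}-\theta)\bigr)$, which is continuous and strictly decreasing in $\theta$ with limits $1$ and $0$, so it meets the line $y=\theta$ at a unique $\theta^{\ast}$ and the regime is $s=$'+' iff $\theta\le\theta^{\ast}$. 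A type-$k$ player observing $x_k$ has, by the Gaussian conjugate update, posterior $\theta\mid x_k\sim\mathcal N\!\bigl(\tfrac{\sigma^2\mu+\sigma_k^2 x_k}{\sigma^2+\sigma_k^2},\tfrac{1}{\sigma^2+\sigma_k^2}\bigr)$, hence assigns probability $\Phi\!\bigl(\sqrt{\sigma^2+\sigma_k^2}\,\theta^{\ast}-\tfrac{\sigma^2\mu+\sigma_k^2 x_k}{\sqrt{\sigma^2+\sigma_k^2}}\bigr)$ to $s=$'+', and her expected gain from vaccinating is affine and weakly increasing in this probability because $u^{d,k,+}_1-u^{d,k,+}_0\ge u^{d,k,-}_1-u^{d,k,-}_0$ (Proposition~\ref{prop:supermodular game regime change}). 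Setting this gain to zero at $x_k=\hat{x}^{d,k}$ pins $\hat{x}^{d,k}$ down in terms of $\theta^{\ast}$, and plugging back into $A(\theta^{\ast})=\theta^{\ast}$ reduces everything, after simplification, to a single equation $\theta^{\ast}=F(\theta^{\ast})$ whose summands have the form $\Phi\!\bigl(\tfrac{1}{\sigma_k}\bigl(\sigma^2(\theta^{\ast}-\mu)-\sqrt{\sigma^2+\sigma_k^2}\,\Phi^{-1}(q^{d,k})\bigr)\bigr)$, with $q^{d,k}\in(0,1)$ the indifference probability of a degree-$d$, type-$k$ player.

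Uniqueness then follows from a contraction estimate. Differentiating $F$ in $\theta^{\ast}$ and bounding the standard normal density by its maximum $1/\sqrt{2\pi}$ gives $F'(\theta^{\ast})\le \tfrac{\sigma^2}{\sqrt{2\pi}}\sum_{d,k}\tfrac{m^{d,k}}{\sigma_k}=\tfrac{\sigma^2}{\sqrt{2\pi}}\sum_{k}\tfrac{m^{k}}{\sigma_k}$, which is at most $1$ precisely under (\ref{eq:condition on uniqueness of equilibrium}); since this slope bound is strict off a null set, $F$ cannot have two fixed points $a<b$ (they would force $b-a=\int_a^b F'<b-a$), while existence is immediate because $F(\theta)-\theta$ is positive as $\theta\to-\infty$ and negative as $\theta\to+\infty$. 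Reading the unique $\theta^{\ast}$ back through the indifference relations yields a unique vector of cutoffs, that is, a unique switching equilibrium.

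The step I expect to be the main obstacle is the implicit dependence of the indifference probabilities $q^{d,k}$ on the endogenous vaccination profile: since $u^{d,k,s}_1-u^{d,k,s}_0=r\bigl(I^{d,k,s}_0(T)-I^{d,k,s}_1(T)\bigr)-c$ is governed through $\Theta^s(\cdot)$ in (\ref{eq:epidemic dynamics restricted})--(\ref{eq:epidemic dynamics reopened}) by the whole population's choices, $F$ carries an additional, implicit $\theta^{\ast}$-dependence through the $q^{d,k}$, and one must certify that this does not push $F'$ above $1$. I would dispatch it using Proposition~\ref{prop:submodular game fixed state}: within a fixed regime decisions are strategic substitutes, so a larger $\theta^{\ast}$ (hence more anticipated vaccination) moves each $q^{d,k}$ in the direction that makes vaccination \emph{less} attractive at the margin, rendering the extra term in $F'$ nonpositive, so the clean bound $\tfrac{\sigma^2}{\sqrt{2\pi}}\sum_k\tfrac{m^k}{\sigma_k}$ still controls the slope. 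An alternative that avoids writing $F$ explicitly is to work on the complete lattice of switching strategies, where Proposition~\ref{prop:supermodular game regime change} supplies a largest and a smallest switching equilibrium and the same density bound shows the monotone best-response operator contracts the gap between them to zero; in either case the entire quantitative content collapses to $\max\phi=1/\sqrt{2\pi}$ together with (\ref{eq:condition on uniqueness of equilibrium}).
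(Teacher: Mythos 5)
Your proposal is correct and follows essentially the same route as the paper: restrict to monotone switching strategies, derive the aggregate fixed point $\theta^*=A(\theta^*)$, use the Gaussian posterior to pin each cutoff to $\theta^*$ via an indifference condition, substitute back into a single scalar equation, and obtain uniqueness by bounding the normal density by $1/\sqrt{2\pi}$, which yields exactly (\ref{eq:condition on uniqueness of equilibrium}). The one substantive divergence is the indifference condition: the paper simply posits $P(\theta\leq\theta^*\,|\,x^*_k)=c$ in (\ref{eq:proof:Posterior=c}), i.e.\ it treats the indifference posterior as the exogenous constant $c$, independent of $d$, $k$, and the endogenous infection profile, whereas you correctly observe that the true indifference probability $q^{d,k}$ depends on the payoff gaps $u^{d,k,s}_1-u^{d,k,s}_0$ and hence on $\Theta^s(\cdot)$; your proposed repair of this via the strategic-substitutes direction of Proposition \ref{prop:submodular game fixed state} is only a sketch, but the paper does not resolve this point either---it normalizes it away---so your argument is, if anything, more explicit about where the remaining work lies.
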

\begin{proof}
Consider players with type $k$.
For a sufficiently low signal $x_k$, it is dominant for them to vaccinate because of the utility gains $g^d, \forall d\in\mathcal{D}$.
Similarly, it is dominant not to vaccinate for sufficiently high signals $x_k$.
Hence, it is natural to consider switching strategies with critical signals $x^*_k, \forall k\in\mathcal{K}$, \ie, players with type $k$ vaccinate if and only if $x_k \leq x^*_k$.

Given a vaccination coverage threshold $\theta$, the proportion of players with type $k$ who choose to vaccinate is $m^k_1=\sum_{d\in\mathcal{D}}m^{d,k}P(x_k\leq x^*_k|\theta)=m^kP(x_k\leq x^*_k|\theta)$, where the second equality follows the fact that the posterior probability for $x_k\leq x^*_k$ given $\theta$ is independent of players' degrees.
Let $\Phi$ denote the cdf of the standard normal, we obtain $P(x_k\leq x^*_k|\theta)=\Phi(\sigma_k(x^*_k-\theta))$.
Then, the population's averaged action, given $\theta$, is
$A(\theta)=\sum_{k\in\mathcal{K}}m^k_1=\sum_{k\in\mathcal{K}}m^k\Phi(\sigma_k(x^*_k-\theta))$.
Hence, the state of the game is '+' if and only if the threshold is low enough, \ie, $\theta\leq \theta^*$, where $\theta^*$ sovles
\begin{equation}
    \theta^*=A(\theta^*).
    \label{eq:proof:theta=A(theta)}
\end{equation}
For a player with type $k\in\mathcal{K}$, the posterior probability of regime change given her private signal $x_k$ is $P(\theta\leq \theta^*|x_k)$, which is decreasing in $x_k$.
Thus, she vaccinates if and only if 
$x_k\leq x^*_k$, where $x^*_k$ solves
\begin{equation}
    P(\theta\leq\theta^*|x^*_k)=c.
    \label{eq:proof:Posterior=c}
\end{equation}
Equation (\ref{eq:proof:Posterior=c}) means that the posterior probability of the state being '+' given critical signal compensates for the vaccination cost $c$.
Equation (\ref{eq:proof:Posterior=c}) can be explicitly expressed as
\begin{equation}
    \Phi\left( \sqrt{\sigma^2+\sigma^2_k}(\theta^*-\frac{\sigma^2_k}{\sigma^2+\sigma^2_k}x^*_k-\frac{\sigma^2}{\sigma^2+\sigma^2_k}\mu) \right)=c.
    \label{eq:proof:Posterior=c explicit}
\end{equation}

A switching strategy for players is given by $([x^*_k]_{\forall k\in\mathcal{K}},\theta^*)$, which solves (\ref{eq:proof:theta=A(theta)}) and (\ref{eq:proof:Posterior=c explicit}) jointly.

To show the existence and uniqueness of the solution to (\ref{eq:proof:theta=A(theta)}) and (\ref{eq:proof:Posterior=c explicit}), we first solve (\ref{eq:proof:Posterior=c explicit}) for $x^*_k$.
From (\ref{eq:proof:Posterior=c explicit}), we obtain
\begin{equation}
    x^*_k=\frac{\sqrt{\sigma^2+\sigma^2_k}\Phi^{-1}(c)+\sigma^2\mu-(\sigma^2+\sigma^2_k)\theta^*}{-\sigma^2_k}.
    \label{eq:proof:x*d explicit}
\end{equation}
Combining (\ref{eq:proof:x*d explicit}) and (\ref{eq:proof:theta=A(theta)}), we obtain
\small
\begin{equation}
\begin{aligned}
    \theta^*&=\sum_{k\in\mathcal{K}}m^k\Phi\left( \sigma_k(x^*_k-\theta^*) \right) \\
    &=\sum_{k\in\mathcal{K}}m^k\Phi\left( \frac{\sqrt{\sigma^2+\sigma^2_k}\Phi^{-1}(c)+\sigma^2\mu-\sigma^2\theta^*}{-\sigma_k} \right).
    \label{eq:proof:theta=W(theta)}
\end{aligned}
\end{equation}
\normalsize
Define $W:(0,1)\rightarrow \mathbb{R}$ by $W(\theta)=\sum_{k\in\mathcal{K}}m^k\Phi\left( \frac{\sqrt{\sigma^2+\sigma^2_k}\Phi^{-1}(c)+\sigma^2\mu-\sigma^2\theta}{-\sigma_k} \right)-\theta$.
The equality (\ref{eq:proof:theta=W(theta)}) can be written as $W(\theta^*)=0$.
To show the existence, we first note that $W$ is continuously differentiable. 
Moreover, the end points of $W(\cdot)$ satisfy
\small
\begin{equation*}
    \lim_{\theta \rightarrow 0}W(\theta)=\sum_{k\in\mathcal{K}}m^k\Phi\left( \frac{\sqrt{\sigma^2+\sigma^2_k}\Phi^{-1}(c)+\sigma^2\mu}{-\sigma_k} \right) > 0,
\end{equation*}
\normalsize
and
\small
\begin{equation}
    \lim_{\theta \rightarrow 1}W(\theta)=\sum_{k\in\mathcal{K}}m^k\Phi\left( \frac{\sqrt{\sigma^2+\sigma^2_k}\Phi^{-1}(c)+\sigma^2\mu-\sigma^2}{-\sigma_k} \right) -1 < 0,
    \label{eq:proof:limit theta goes to 1}
\end{equation}
\normalsize
where (\ref{eq:proof:limit theta goes to 1}) follows from the definition that $m^kP(x^k\leq x^*_k|\theta)$ represents a subset of players of the whole population.
Therefore, there exists a $\theta^*\in(0,1)$ which solves $W(\theta)=0$.

To show the uniqueness, we use the derivative of $W$:
\small
\begin{equation}
    \frac{\partial W(\theta)}{\partial \theta}= \sum_{k\in\mathcal{K}}m^k\phi(\frac{\sqrt{\sigma^2+\sigma^2_k\Phi^{-1}(c)}+\sigma^2\mu-\sigma^2\theta}{-\sigma_k})\frac{\sigma^2}{\sigma_k}  -1,
\end{equation}
\normalsize
where $\phi(\cdot)$ denotes the pdf of the standard normal.
Since $\lim_{\theta\rightarrow 0}W(\theta)>0$ and $\lim_{\theta\rightarrow 1}W(\theta)<0$, the uniqueness is guaranteed if $\frac{\partial W(\theta)}{\partial \theta}\leq0$.
When the argument is $0$, $\phi(\cdot)$ obtains the maximum value $\frac{1}{\sqrt{2\pi}}$.
Therefore, when (\ref{eq:condition on uniqueness of equilibrium})
is satisfied, we arrive at $\frac{\partial W(\theta)}{\partial \theta}\leq0$. 
This proves that $([x^*_k]_{\forall k\in\mathcal{K}},\theta^*)$ is the unique Nash equilibrium in switching strategies.

Next, we invoke the result in \cite{milgrom1990rationalizability} that there exists a unique strategy profile surviving iterated deletion of strictly dominant strategies for a game with strategic complementarity. 
The proof showing that the switching strategy  $([x^*_k]_{\forall k\in\mathcal{K}},\theta^*)$ survives iterated deletion of strictly dominant strategies follows the same arguments as \cite{morris2001global}.
Combining the strategic complementarity result from Proposition \ref{prop:supermodular game regime change}, we complete the proof.
\qed
\end{proof}

The reason why condition (\ref{eq:condition on uniqueness of equilibrium}) is independent of players' degrees is as follows.
Firstly, the structural property in Proposition \ref{prop:supermodular game regime change} allows us to focus on the behavioral patterns of the whole population rather than on strategy revisions of individuals.
This means that the strategic complementarity property has already taken into account the degree-dependent effects of infection risks captured by the coupled epidemic processes.
Secondly, the independence of degrees and types allows us to describe the behaviors of players with different degrees using the single posterior probability $P(x_k\leq x^*_k|\theta)$ as long as the players have the same type $k$.

We have incorporated many elements into our framework including switching regimes, public and private signals, and epidemic processes to capture the multifaceted behaviors of the players.
Under such complex settings, Proposition \ref{prop:equilibrium of global game}  shows that the players' vaccination decisions turn out to be predictable, \ie, the actions are captured by $([x^*_k]_{\forall k\in\mathcal{K}},\theta^*)$.

\section{Informational epidemic control}
\label{sec: informational epidemic control}

The long-term behaviors of the epidemic processes (\ref{eq:epidemic dynamics restricted}) and (\ref{eq:epidemic dynamics reopened}) are tightly connected with practical epidemic control policies. 
When a non-trivial steady state of the epidemics is considered, we often seek social policies which either decrease the total proportion of the infected or reduce the period required for reaching a desired level of the infected.
The reason lies in that the disease-free steady state often requires strong assumptions on the contagion rate and the recovery rate of the epidemics.
On the contrary, the integration of the epidemic spreading and the population game of incomplete information in our framework makes the disease-free steady state reachable through proper designs of the information.
In particular, we will show in the following that the public and private signals serve as tools to nudge people to vaccine.

We assume $\frac{\gamma}{d\lambda}\leq1, \forall d\in\mathcal{D}$. It is the condition that guarantees that the virus does not die out by itself \cite{liu2021herd}.

Let $(\Bar{\Theta}^+,\Bar{I}^+)$ denote the steady-state pair of $(\Theta^+(t),I^+(t))$ when the regime is '+'.
The pair $(\Bar{\Theta}^+,\Bar{I}^+)$ satisfies
\begin{equation}
    \Bar{\Theta}^+=\Bar{d}^{-1}\sum_{d\in\mathcal{D}}\sum_{k\in\mathcal{K}}\sum_{i\in\mathcal{A}}dm^{d,k}_i\Bar{I}^{d,k,+}_i,
    \label{eq:steady state eq 1 in global game epidemics}
\end{equation}
and
\begin{equation}
    \gamma\Bar{I}^{d,k,+}_i=\lambda_i(1-\Bar{I}^{d,k,+}_i)d\Bar{\Theta}^+.
    \label{eq:steady state eq 2 in global game epidemics}
\end{equation}
Equations (\ref{eq:steady state eq 1 in global game epidemics}) and (\ref{eq:steady state eq 2 in global game epidemics}) yield 
\begin{equation}
    \Bar{\Theta}^+=\Bar{d}^{-1}\sum_{d\in\mathcal{D}}\sum_{k\in\mathcal{K}}\sum_{i\in\mathcal{A}}dm^{d,k}_i\frac{\Bar{\Theta}^+}{\frac{\gamma}{\lambda_i d}-\Bar{\Theta}^+}.
    \label{eq:steady state Theta in global game epidemics}
\end{equation}
We observe from (\ref{eq:steady state Theta in global game epidemics}) that $(\Bar{\Theta}^+,\Bar{I}^+)=(0,0)$ is the disease-free steady-state pair.
The next result shows a sufficient condition under which a disease-free steady state in a ``reopened" regime is approachable through a proper design of players' accuracy of signals $\sigma_d, \forall d\in\mathcal{D}$.

\begin{proposition}
\label{prop:disease free steady state via private signal}
There exists $\Sigma^{d,k}\subset\mathbb{R}$, for all $d\in\mathcal{D}$ and all $k\in\mathcal{K}$, such that if $\sigma_k\in\cap_{d\in\mathcal{D}}\Sigma^{d,k}$, the disease-free steady-state pair $(\Bar{\Theta}^+,\Bar{I}^+)=(0,0)$ is globally asymptotically stable.

\end{proposition}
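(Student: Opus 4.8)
The strategy is to split the claim into an epidemic part and a game part. For the epidemic part I would show that global asymptotic stability of $(\bar{\Theta}^+,\bar{I}^+)=(0,0)$ for \eqref{eq:epidemic dynamics reopened} is governed by a single scalar threshold built from the equilibrium mass profile. The cube $[0,1]^{|\mathcal{D}||\mathcal{K}||\mathcal{A}|}$ is forward invariant (inspect the sign of $\dot{I}^{d,k,+}_i$ on each face: it is $\lambda_i d\Theta^+\geq0$ when $I^{d,k,+}_i=0$ and $-\gamma<0$ when $I^{d,k,+}_i=1$), and solving the steady-state relation \eqref{eq:steady state eq 2 in global game epidemics} gives $\bar{I}^{d,k,+}_i(\Theta)=1-\gamma/(\gamma+\lambda_i d\Theta)$, which is increasing and strictly concave in $\Theta\geq0$; hence the self-consistency map $f(\Theta)=\bar{d}^{-1}\sum_{d,k,i}d\,m^{d,k}_i\bar{I}^{d,k,+}_i(\Theta)$ is increasing, concave, with $f(0)=0$ and $f'(0)=R:=(\gamma\bar{d})^{-1}\sum_{d,k,i}\lambda_i d^2 m^{d,k}_i$. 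Thus $R\leq1$ already forces $\Theta=0$ to be the unique nonnegative fixed point of $f$, i.e.\ $(0,0)$ is the only steady state. For global attractivity I would use the linear Lyapunov function $V(I)=\sum_{d,k,i}d\,m^{d,k}_i I^{d,k,+}_i=\bar{d}\,\Theta^+$: along \eqref{eq:epidemic dynamics reopened} one gets $\dot{V}=-\gamma V+\Theta^+\sum_{d,k,i}\lambda_i d^2 m^{d,k}_i(1-I^{d,k,+}_i)\leq-\gamma(1-R)V\leq0$, and LaSalle's principle (the largest invariant subset of $\{\dot{V}=0\}$ is the origin, since there $\Theta^+\equiv0$ and then $\dot{I}^{d,k,+}_i\leq-\gamma I^{d,k,+}_i$) gives $I(t)\to0$. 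This is the familiar below-threshold argument for degree-based mean-field SIS, here with heterogeneous rates $\lambda_i$; I would cite \cite{liu2021herd,pastor2015epidemic} for the template.

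For the game part I would substitute the unique switching equilibrium of Proposition \ref{prop:equilibrium of global game} into $R$. Given the realized $\theta$, the equilibrium masses are $m^{d,k}_1=m^{d,k}p^k$ and $m^{d,k}_0=m^{d,k}(1-p^k)$ with $p^k=p^k(\theta):=\Phi(\sigma_k(x^*_k-\theta))$ independent of $d$ (precisely the degree-independence of the posterior exploited in the proof of Proposition \ref{prop:equilibrium of global game}); with $\lambda_0=\lambda$, $\lambda_1=\beta\lambda$ this gives
\[
R=\frac{\lambda}{\gamma\bar{d}}\sum_{d\in\mathcal{D}}\sum_{k\in\mathcal{K}}d^2 m^{d,k}\big(1-(1-\beta)p^k(\theta)\big).
\]
In the reopened regime $\theta\leq\theta^*$, and since $p^k(\theta)$ is decreasing in $\theta$, $R$ is maximized at $\theta=\theta^*$, where \eqref{eq:proof:x*d explicit} yields $\sigma_k(x^*_k-\theta^*)=\big(\sqrt{\sigma^2+\sigma_k^2}\,\Phi^{-1}(c)+\sigma^2(\mu-\theta^*)\big)/(-\sigma_k)$. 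I would then define, for each $d$ and $k$,
\[
\Sigma^{d,k}:=\Big\{\sigma_k>0:\ \eqref{eq:condition on uniqueness of equilibrium}\text{ holds, and }\tfrac{\lambda d^2}{\gamma\bar{d}}\big(1-(1-\beta)\Phi(\sigma_k(x^*_k-\theta^*))\big)\leq1\Big\},
\]
so that $\sigma_k\in\bigcap_{d\in\mathcal{D}}\Sigma^{d,k}$ for every $k$ makes each coefficient in the sum for $R$ at most $1$ (using $p^k(\theta)\geq p^k(\theta^*)$ on $\theta\leq\theta^*$), whence $R\leq\sum_{d,k}m^{d,k}=1$ and the epidemic part applies; the $d$-index on $\Sigma^{d,k}$ just records that the largest degree supplies the binding constraint.

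It remains to check $\bigcap_{d}\Sigma^{d,k}\neq\varnothing$, i.e.\ feasibility of the design. Letting $\sigma_k\to\infty$ and using that $\theta^*\in(0,1)$ stays bounded (Proposition \ref{prop:equilibrium of global game}), we get $\sqrt{\sigma^2+\sigma_k^2}/\sigma_k\to1$ and $\sigma^2(\mu-\theta^*)/\sigma_k\to0$, hence $p^k(\theta^*)\to\Phi(-\Phi^{-1}(c))=1-c$, while $\sum_k m^k/\sigma_k\to0$ so the uniqueness condition \eqref{eq:condition on uniqueness of equilibrium} is eventually met and Proposition \ref{prop:equilibrium of global game} remains applicable. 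Thus each $(d,k)$-constraint converges to $\lambda d^2(\beta+(1-\beta)c)\leq\gamma\bar{d}$, which holds for every $d\in\mathcal{D}$ under the natural feasibility assumption $\lambda D^2(\beta+(1-\beta)c)<\gamma\bar{d}$ (without some such bound no informational design can eradicate the virus, since the critical belief $c$ caps the attainable vaccination share); then $\Sigma^{d,k}$ contains a half-line and the intersection is nonempty. The step I expect to be the main obstacle is not any single estimate but the feedback between the design variable $\sigma$ and the equilibrium: $R$ depends on $\sigma$ not only directly through $p^k$ but also through the cutoffs $x^*_k$ and the regime boundary $\theta^*$, which solve the fixed-point system \eqref{eq:proof:theta=W(theta)}; one has to verify this dependence is tame enough for the $\sigma_k\to\infty$ limit to be legitimate, which is why I lean on Proposition \ref{prop:equilibrium of global game} pinning $\theta^*\in(0,1)$ uniformly and giving $x^*_k$ explicitly in terms of $\theta^*$. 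Upgrading local to global stability, a secondary concern, is handled directly by the Lyapunov function $V=\bar{d}\,\Theta^+$ above rather than by invoking abstract monotone-systems theory.
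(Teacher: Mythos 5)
Your proof is correct and follows the same overall architecture as the paper's: a linear Lyapunov function $V=\sum_{d,k,i}d\,m^{d,k}_iI^{d,k,+}_i$ reduces global asymptotic stability of $(0,0)$ to the threshold condition $\sum_{d,k,i}\lambda_i d^2 m^{d,k}_i\leq \gamma\bar{d}$ (your $R\leq 1$), after which the closed-form equilibrium from Proposition \ref{prop:equilibrium of global game} is substituted and $\sigma_k$ is tuned to meet the condition. Your epidemic half is in fact more complete than the paper's (forward invariance, uniqueness of the steady state, LaSalle at $R=1$), and your use of the masses conditional on the realized $\theta$, bounded from below by the worst case $\theta=\theta^*$ on the reopened event, is more careful than the paper's substitution of the unconditional $m^{d,k}\Phi_k(x^*_k)$. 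Where you genuinely diverge is the nonemptiness of $\cap_d\Sigma^{d,k}$: the paper reformulates $R\leq1$ as $\theta^*\geq\hat{\theta}^*$ in (\ref{eq:proof:condition on theta*}), reduces it to $W(\hat{\theta}^*)\geq0$, and runs an intermediate-value argument between the limits (\ref{eq:proof:limit at 0}) and (\ref{eq:proof:limit at infty}) as $\sigma_k\to0$ and $\sigma_k\to\infty$, which buys a characterization of where in $(0,\infty)$ the feasible set sits but quietly imposes the side condition $\Phi(\sigma(1-c-\mu))\lessgtr\mu+\sigma^{-1}\Phi^{-1}(c)$; you instead take only the $\sigma_k\to\infty$ limit, obtain $p^k(\theta^*)\to1-c$, and impose the explicit requirement $\lambda D^2(\beta+(1-\beta)c)<\gamma\bar{d}$, which buys transparency about the fact that the proposition cannot hold with nonempty $\Sigma^{d,k}$ for arbitrary $(\lambda,\gamma,\beta,c)$ — since the equilibrium vaccination share is capped near $1-c$, some such epidemiological feasibility bound is unavoidable, and the paper's proof conceals rather than avoids it. Your handling of the feedback between $\sigma_k$ and $(\theta^*,x^*_k)$ via the identity $\sigma_k(x^*_k-\theta^*)=(\sqrt{\sigma^2+\sigma_k^2}\,\Phi^{-1}(c)+\sigma^2\mu-\sigma^2\theta^*)/(-\sigma_k)$ together with $\theta^*\in(0,1)$ is exactly the right way to make the limit legitimate, and matches what the paper does implicitly.
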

\begin{proof}
Since (\ref{eq:epidemic dynamics reopened}) is upper bounded by $-\gamma I^{d,k,+}_i(t)+\lambda_i d \Theta^+(t)$, it suffices to show the global asymptotic stability of the disease-free steady state $(\Bar{\Theta}^+,\Bar{I}^+)=(0,0)$ of the following auxiliary dynamical system:
\begin{equation*}
    \dot{I}^{d,k,+}_i=-\gamma I^{d,k,+}_i(t)+\lambda_i d \Theta^+(t).
\end{equation*}
To do so, we pick the Lyapunov function $V(t)=\sum_{d\in\mathcal{D}}\sum_{k\in\mathcal{K}}\sum_{i\in\mathcal{A}}b^{d,k}_iI^{d,k,+}_i(t)$, where $b^{d,k}_i=\frac{dm^{d,k}_i}{\gamma}\geq 0$.
Then, we obtain
\begin{equation*}
    \frac{dV(t)}{dt}=\Theta^+(t)\left( -\Bar{d}+\sum_{d\in\mathcal{D}}\sum_{k\in\mathcal{K}}\sum_{i\in\mathcal{A}}\frac{d^2\lambda_im^{d,k}_i}{\gamma} \right).
\end{equation*}
A sufficient condition to guarantee the global asymptotic stability is 
\begin{equation*}
    \sum_{d\in\mathcal{D}}\sum_{k\in\mathcal{K}}\sum_{i\in\mathcal{A}}\frac{d^2\lambda_im^{d,k}_i}{\gamma}\leq \Bar{d},
\end{equation*}
which is satisfied when
\begin{equation*}
    \frac{d\lambda}{\gamma}\sum_{k\in\mathcal{K}}(m^{d,k}_0+\beta m^{d,k}_1)\leq m^d, \forall d\in\mathcal{D},
\end{equation*}
or equivalently, when
\begin{equation}
    \sum_{k\in\mathcal{K}}m^{d,k}_1\geq \frac{m^d(\frac{\gamma}{d\lambda}-1)}{\beta -1}, \forall d\in\mathcal{D}.
    \label{eq:proof:condition on m^d_1}
\end{equation}
Inequality (\ref{eq:proof:condition on m^d_1}) is satisfied if
\begin{equation}
    m^{d,k}_1\geq\frac{m^{d,k}(\frac{\gamma}{d\lambda}-1)}{\beta -1}, \forall d\in\mathcal{D}, \forall k\in\mathcal{K}.
    \label{eq:proof:condition on m^dk_1}
\end{equation}
Observing that $x_k$ follows a normal distribution with mean $\mu$ and variance $\frac{1}{\sigma^2}+\frac{1}{\sigma^2_k}$, we obtain that the cumulative distribution function of $x_k$ is given by
\begin{equation*}
    \Phi_k(x_k):=\Phi\left( \frac{x_k-\mu}{\sqrt{\frac{\sigma^2+\sigma^2_k}{\sigma^2\sigma^2_k}}} \right).
\end{equation*}
According to the proof of Proposition \ref{prop:equilibrium of global game}, $m^{d,k}_1=m^{d,k}\Phi_k(x_k^*)$ where
\small
\begin{equation*}
    x^*_k=\frac{\sqrt{\sigma^2+\sigma^2_k}\Phi^{-1}(c)+\sigma^2\mu-(\sigma^2+\sigma^2_k)\theta^*}{-\sigma_k^2},
\end{equation*}
\normalsize
where $\theta^*$ satisfies the following fixed-point equation:
\small
\begin{equation}
    \theta^*=\sum_{k\in\mathcal{K}}m^k\Phi\left( \frac{\sqrt{\sigma^2+\sigma^2_k}\Phi^{-1}(c)+\sigma^2\mu-\sigma^2\theta^*}{-\sigma_k} \right).
    \label{eq:proof:fixed point eq of theta*}
\end{equation}
\normalsize
Therefore, combining (\ref{eq:proof:condition on m^dk_1}) with $\Phi_k(x_k^*)$ yields, for all $k\in\mathcal{K}$:
\small
\begin{equation}
    \theta^*\geq \hat{\theta}^*:= \mu+
    \frac{\Phi^{-1}(c)}{\sqrt{\sigma^2+\sigma^2_k}}+
    \frac{\sigma_k}{\sigma\sqrt{\sigma^2+\sigma^2_k}}\Phi^{-1}\left(e_d \right), 
    \label{eq:proof:condition on theta*}
\end{equation}
\normalsize
where $e_d:=\frac{\frac{\gamma}{d\lambda}-1}{\beta-1}$.
Next, we need to guarantee that $\theta^*$ solves the fixed-point equation (\ref{eq:proof:fixed point eq of theta*}) and satisfies the inequalities (\ref{eq:proof:condition on theta*}) simultaneously.
Recall from the proof of Proposition \ref{prop:equilibrium of global game} that $\theta^*$ solves the fixed-point equation (\ref{eq:proof:fixed point eq of theta*}) if and only if $W(\theta^*)=0$, and that $W(\cdot)$ is a continuous function and $\lim_{\theta\rightarrow 1}W(\theta)<0$.
Hence, it suffices to show that for $\theta^*=\hat{\theta}^*$,
$W(\theta^*)\geq 0$.
By combining the definition of $W(\theta)$ and equation (\ref{eq:proof:condition on theta*}), we obtain the following limit points for all $k\in\mathcal{K}$:
\begin{equation}
    \lim_{\sigma_k\rightarrow 0}W(\theta^*)=\sum_{k\in\mathcal{K}}m^ke_d-\mu-\frac{1}{\sigma}\Phi^{-1}(c),
    \label{eq:proof:limit at 0}
\end{equation}
and
\begin{equation}
    \lim_{\sigma_k\rightarrow \infty}W(\theta^*)=\sum_{k\in\mathcal{K}}m^k(1-c)-\mu-\frac{1}{\sigma}\Phi^{-1}(e_d),
    \label{eq:proof:limit at infty}
\end{equation}
Now, it suffices to show that (\ref{eq:proof:limit at 0}) and (\ref{eq:proof:limit at infty}) have the opposite sign.
We present the case where $ \lim_{\sigma_k\rightarrow 0}W(\theta^*)<0$ and $ \lim_{\sigma_k\rightarrow \infty}W(\theta^*)>0$.
The other case follows the same reasoning.
Suppose that $ \lim_{\sigma_k\rightarrow \infty}W(\theta^*)>0$, $\forall d\in\mathcal{D}$.
Then,
\begin{equation*}
    1-c-\mu-\frac{1}{\sigma}\Phi^{-1}(e_d)>0, \forall d\in\mathcal{D},
\end{equation*}
which is equivalent to 
\begin{equation}
    \Phi(\sigma(1-c-\mu))>e_d, \forall d\in\mathcal{D}.
    \label{eq:proof:condition on ed}
\end{equation}
Taking the weighted summation of (\ref{eq:proof:condition on ed}) over $k$, we obtain
\begin{equation*}
    \sum_{k\in\mathcal{K}}m^ke_d<\sum_{k\in\mathcal{K}}m^k\Phi(\sigma(1-c-\mu))=\Phi(\sigma(1-c-\mu)).
\end{equation*}
Then, the limit point can be represented as
\begin{equation*}
\begin{aligned}
    \lim_{\sigma_k\rightarrow 0}W(\theta^*)&=\sum_{k\in\mathcal{K}}m^de_d-\mu-\frac{1}{\sigma}\Phi^{-1}(c)\\
    &<\Phi(\sigma(1-c-\mu))-\mu-\frac{1}{\sigma}\Phi^{-1}(c).
    \end{aligned}
\end{equation*}
So, under condition $\Phi\left( \sigma(1-c-\mu) \right) < \mu +\frac{1}{\sigma}\Phi^{-1}(c)$.
The condition for the case where $\lim_{\sigma_d\rightarrow 0}W(\theta^*)>0$ and $ \lim_{\sigma_d\rightarrow \infty}W(\theta^*)<0$ hold is $\Phi\left( \sigma(1-c-\mu) \right) > \mu +\frac{1}{\sigma}\Phi^{-1}(c)$.
Therefore, we conclude that the disease-free steady state is globally asymptotically stable.
The existence of sets $\Sigma^{d,k}, \forall d\in\mathcal{D}, \forall k\in\mathcal{K}$ follows from the fact that $W(\theta^*)$ is continuous with respect to $\sigma_k$ and the limit points of $W(\theta^*)$ have opposite signs.
This completes the proof.
\qed
\end{proof}

Proposition \ref{prop:disease free steady state via private signal} has corroborated the existence of private signals that guarantee the global asymptotic stability of the disease-free steady state of the epidemic process under the reopened regime.
The private signals, when chosen from the sets $\cap_{d\in\mathcal{D}}\Sigma^{d,k}$, nudge players to take the vaccine and drive the epidemic to extinction. 
The vaccination coverage threshold which guarantees safe reopening is obtained in (\ref{eq:proof:condition on theta*}).

Next, we provide another sufficient condition for the stability of the disease-free steady state.

From the proof of Proposition \ref{prop:disease free steady state via private signal}, we know that a sufficient condition for the global asymptotic stability of $(\Bar{\Theta}^+,\Bar{I}^+)=(0,0)$ is $W(\hat{\theta}^*)\geq 0$.
From the definition of $W(\cdot)$, we obtain
\small
\begin{equation}
\begin{aligned}
    W(\hat{\theta}^*)&=\sum_{k\in\mathcal{K}}m^k\Phi\left( -\frac{\sigma_k}{\sqrt{\sigma^2+\sigma^2_k}}\Phi^{-1}(c)+\frac{\sigma}{\sqrt{\sigma^2+\sigma^2_k}}\Phi^{-1}(e_d) \right) \\
    & \ \ - \left( \mu+\frac{\Phi^{-1}(c)}{\sqrt{\sigma^2+\sigma^2_k}}+\frac{\sigma_k}{\sigma\sqrt{\sigma^2+\sigma^2_k}}\Phi^{-1}(e_d) \right).
    \label{eq:W(theta*)}
    \end{aligned}
\end{equation}
\normalsize
The first term on the right-hand side of (\ref{eq:W(theta*)}) is always positive.
Define $Y(\sigma_k):=\mu+\frac{\Phi^{-1}(c)}{\sqrt{\sigma^2+\sigma^2_k}}+\frac{\sigma_k}{\sigma\sqrt{\sigma^2+\sigma^2_k}}\Phi^{-1}(e_d)$.
If the condition $Y(\sigma_k)\leq 0$ is satisfied, the global asymptotic stability of the disease-free steady state is guaranteed.
Observing that $Y(\cdot)$ is continuously differentiable on $(0,+\infty)$, we obtain $Y(\sigma_k)\leq 0$ if and only if $\lim_{\sigma_k\rightarrow 0}Y(\sigma_k)\leq 0 $, $\lim_{\sigma_k\rightarrow +\infty}Y(\sigma_k)\leq 0 $, and $Y(\hat{\sigma}_k)\leq 0$, where $\hat{\sigma}_k=\sigma \frac{\Phi^{-1}(e_d)}{\Phi^{-1}(c)}$ solves $\frac{\partial Y(\sigma_k)}{\partial \sigma_k}=0$ being the stationary point of $Y(\cdot)$.
By expressing explicitly these three conditions, we arrive at the following result.
\begin{proposition}
\label{prop:disease free via public signal}
The disease-free steady state $(\Bar{\Theta}^+,\Bar{I}^+)=(0,0)$ is globally asymptotically stable if the following condition holds for all $d\in\mathcal{D}$:
\small
\begin{equation}
    \sigma \mu \leq \min \{-\Phi^{-1}(c), -\Phi^{-1}(e_d), -\frac{1+\Phi^{-1}(e_d)}{\sqrt{(\Phi^{-1}(c))^2+(\Phi^{-1}(e_d))^2}} \}.
    \label{eq:condition on sigma*mu}
\end{equation}
\normalsize
\end{proposition}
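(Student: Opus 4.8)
The plan is to resume the argument exactly where the text preceding the statement stops. From the proof of Proposition~\ref{prop:disease free steady state via private signal} we already have the reduction: the disease-free steady state is globally asymptotically stable as soon as $W(\hat{\theta}^*)\ge 0$, and~(\ref{eq:W(theta*)}) gives a closed form for $W(\hat{\theta}^*)$. The first thing I would do is split~(\ref{eq:W(theta*)}) into its two summands and observe that the first one is a convex combination of values of the standard normal cdf $\Phi$ (the weights $m^k$ summing to one), hence strictly positive. Consequently $W(\hat{\theta}^*)\ge 0$ is implied by the stronger, $\sigma_k$-robust requirement that the bracketed subtracted term be nonpositive for every admissible precision, i.e.\ $Y(\sigma_k)\le 0$ for all $\sigma_k>0$, where $Y(\sigma_k):=\mu+\frac{\Phi^{-1}(c)}{\sqrt{\sigma^2+\sigma^2_k}}+\frac{\sigma_k}{\sigma\sqrt{\sigma^2+\sigma^2_k}}\Phi^{-1}(e_d)$.

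The heart of the proof is then a one-variable extremum analysis of $Y$ on $(0,+\infty)$. Since $Y$ is continuously differentiable there and both endpoint limits exist, $\sup_{\sigma_k>0}Y(\sigma_k)$ is attained either as $\sigma_k\to 0^+$, as $\sigma_k\to+\infty$, or at the unique interior stationary point. Differentiating shows that $\partial Y/\partial\sigma_k$ is a positive multiple of $\sigma\Phi^{-1}(e_d)-\sigma_k\Phi^{-1}(c)$, whose zero is $\hat{\sigma}_k=\sigma\,\Phi^{-1}(e_d)/\Phi^{-1}(c)$. Therefore the requirement that $Y\le 0$ on all of $(0,+\infty)$ reduces to the three scalar inequalities $\lim_{\sigma_k\to 0^+}Y\le 0$, $\lim_{\sigma_k\to+\infty}Y\le 0$, and $Y(\hat{\sigma}_k)\le 0$; imposing all three at once is conservative but frees the condition from any case split on the signs of $\Phi^{-1}(c)$ and $\Phi^{-1}(e_d)$. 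I would then evaluate each inequality: the limit at $0^+$ equals $\mu+\Phi^{-1}(c)/\sigma$, giving $\sigma\mu\le-\Phi^{-1}(c)$; the limit at $+\infty$ equals $\mu+\Phi^{-1}(e_d)/\sigma$, giving $\sigma\mu\le-\Phi^{-1}(e_d)$; and substituting $\hat{\sigma}_k$ and simplifying $\sqrt{\sigma^2+\hat{\sigma}^2_k}$ yields the remaining bound in~(\ref{eq:condition on sigma*mu}). Taking the minimum of the three right-hand sides, and noting that $e_d$ depends on the degree so the condition must hold for every $d\in\mathcal{D}$, reproduces~(\ref{eq:condition on sigma*mu}); appending the reduction from Proposition~\ref{prop:disease free steady state via private signal} completes the proof.

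The step I expect to be the main obstacle is the interior-stationary-point computation. One must keep track of the signs of $\Phi^{-1}(c)$ and $\Phi^{-1}(e_d)$ to decide whether $\hat{\sigma}_k$ actually lies in $(0,+\infty)$ and whether the stationary point is a maximum or a minimum of $Y$ (hence whether $Y(\hat{\sigma}_k)\le 0$ is binding or already subsumed by the endpoint conditions), and then to carry out the algebraic simplification of $Y(\hat{\sigma}_k)$ and select the correct branch of the square root. Everything else---the reduction, the positivity of the first summand, and the two endpoint limits---is routine.
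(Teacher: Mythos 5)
Your proposal reconstructs the paper's own argument essentially verbatim: the paper's ``proof'' of this proposition is exactly the paragraph preceding the statement, namely the reduction to $W(\hat{\theta}^*)\geq 0$ from Proposition~\ref{prop:disease free steady state via private signal}, the observation that the first summand of (\ref{eq:W(theta*)}) is positive so that $Y(\sigma_k)\leq 0$ suffices, and the extremum analysis of $Y$ via its two endpoint limits and the stationary point $\hat{\sigma}_k=\sigma\Phi^{-1}(e_d)/\Phi^{-1}(c)$, yielding the three bounds in (\ref{eq:condition on sigma*mu}). The only step you defer---the algebraic evaluation of $Y(\hat{\sigma}_k)$ and the sign bookkeeping for $\Phi^{-1}(c)$ and $\Phi^{-1}(e_d)$---is likewise left implicit in the paper, so your treatment matches it in both substance and level of detail.
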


The product $\sigma\mu$ in (\ref{eq:condition on sigma*mu}) measures the concentration of the distribution of $\theta$. 
Its reciprocal $\frac{1}{\sigma\mu}$, called the coefficient of variation, measures the dispersion of a probability distribution.
Condition (\ref{eq:condition on sigma*mu}) requires the value of $\sigma\mu$ to be small. 
This indicates that it is the informational uncertainty about the vaccination coverage threshold which drives the epidemic process to the disease-free steady-state globally.

Since $\mu\in(0,1)$, the interesting case happens when the precision of the public signal goes to infinity, \ie, $\sigma\rightarrow +\infty$.
We leave the case with $\sigma\rightarrow +\infty$ and fixed $\mu$ and $\sigma_k$ in the numerical experiments. 
Now, suppose that the precision of the private signals also goes to infinity, but the ratio of the precisions of the public signal and the private signals satisfy (\ref{eq:condition on uniqueness of equilibrium}), \ie, $\sigma_k\rightarrow +\infty$, $\frac{\sigma^2}{\sigma_k}=l\leq \frac{\sqrt{2\pi}}{m_k}$.
Then, (\ref{eq:proof:fixed point eq of theta*}) becomes 
\begin{equation}
    \theta^*=\sum_{k\in\mathcal{K}}m^k\Phi\left( l\theta^*-l\mu+\Phi^{-1}(c) \right).
    \label{eq:fixed point eq of theta^* limit case}
\end{equation}
Leveraging the implicit function theorem, the change of $\theta^*$ with respect to $\mu$ can be expressed as:
\small
\begin{equation*}
\begin{aligned}
    \frac{\partial \theta^*}{\partial \mu}&=-\left(\sum_{k\in\mathcal{K}}m^k\phi(l\theta^*-l\mu+\Phi^{-1}(c))l-1 \right)^{-1} \\
    & \ \ \cdot \left( \sum_{k\in\mathcal{K}}m^k\phi(l\theta^*-l\mu+\Phi^{-1}(c))l \right)\geq 0.
    \end{aligned}
\end{equation*}
\normalsize
Hence, the solution to (\ref{eq:fixed point eq of theta^* limit case}) increases in $\mu$.
In this extreme scenario where $\sigma\rightarrow+\infty$, if (\ref{eq:condition on sigma*mu}) is satisfied, we need $\mu\rightarrow 0$ to hold.
Then, the vaccination coverage threshold $\theta^*$ solves
\begin{equation}
    \theta^*=\sum_{k\in\mathcal{K}}m^k\Phi\left( l\theta^*+\Phi^{-1}(c) \right),
    \label{eq:fixed point eq of theta^* sigma infty mu 0}
\end{equation}
which yields the minimum threshold when the public signal is infinitely precise and concentrates approximately at $0$.

Another perspective toward (\ref{eq:condition on sigma*mu}) is by focusing on the vaccination cost $c$, which is also a parameter that we can control.
From (\ref{eq:condition on sigma*mu}), we obtain the following inequality of $c$:
\small
\begin{equation}
    \Phi\left( -\sqrt{\frac{1+(\Phi^{-1}(e_d))^2}{(\sigma\mu)^2}-(\Phi^{-1}(e_d))^2} \right)\leq c\leq \Phi(-\sigma\mu).
    \label{eq:bounds on c}
\end{equation}
\normalsize
The upper bound of $c$ in (\ref{eq:bounds on c}) serves as a suggestion of pricing of the vaccination to the authorities. 
A high price of the vaccination  will weaken the motivations of people in taking the vaccination.
It results in a low vaccination coverage rate which cannot subdue the virus in the long run.
The lower bound of $c$ in (\ref{eq:bounds on c}) appears since (\ref{eq:condition on sigma*mu}) is only a sufficient condition.
The sufficiency arises from using the Lyapunov's method and the relaxation of the condition (\ref{eq:W(theta*)}).


\section{Numerical experiments}
\label{sec:numerical experiments}
In this section, we continue the discussion of Proposition \ref{prop:disease free via public signal} using numerical experiments. 
We provide a suggested region of the precision of the public information which guarantees the disease-free epidemic steady-state while maintaining a high probability of reopening under the NE. 
In the experiments, we study the effect of public information.
Hence, we set the degree of all players to be equal. 
We consider two information types.
The parameters $c, \mu, \sigma_k, \beta, e_d$ are chosen so that there is a unique NE and $-\Phi^{-1}(c)$ is the minimum element in the right-hand side of (\ref{eq:condition on sigma*mu}).

\begin{figure}[ht]
\centering
\includegraphics[width=0.5\textwidth]{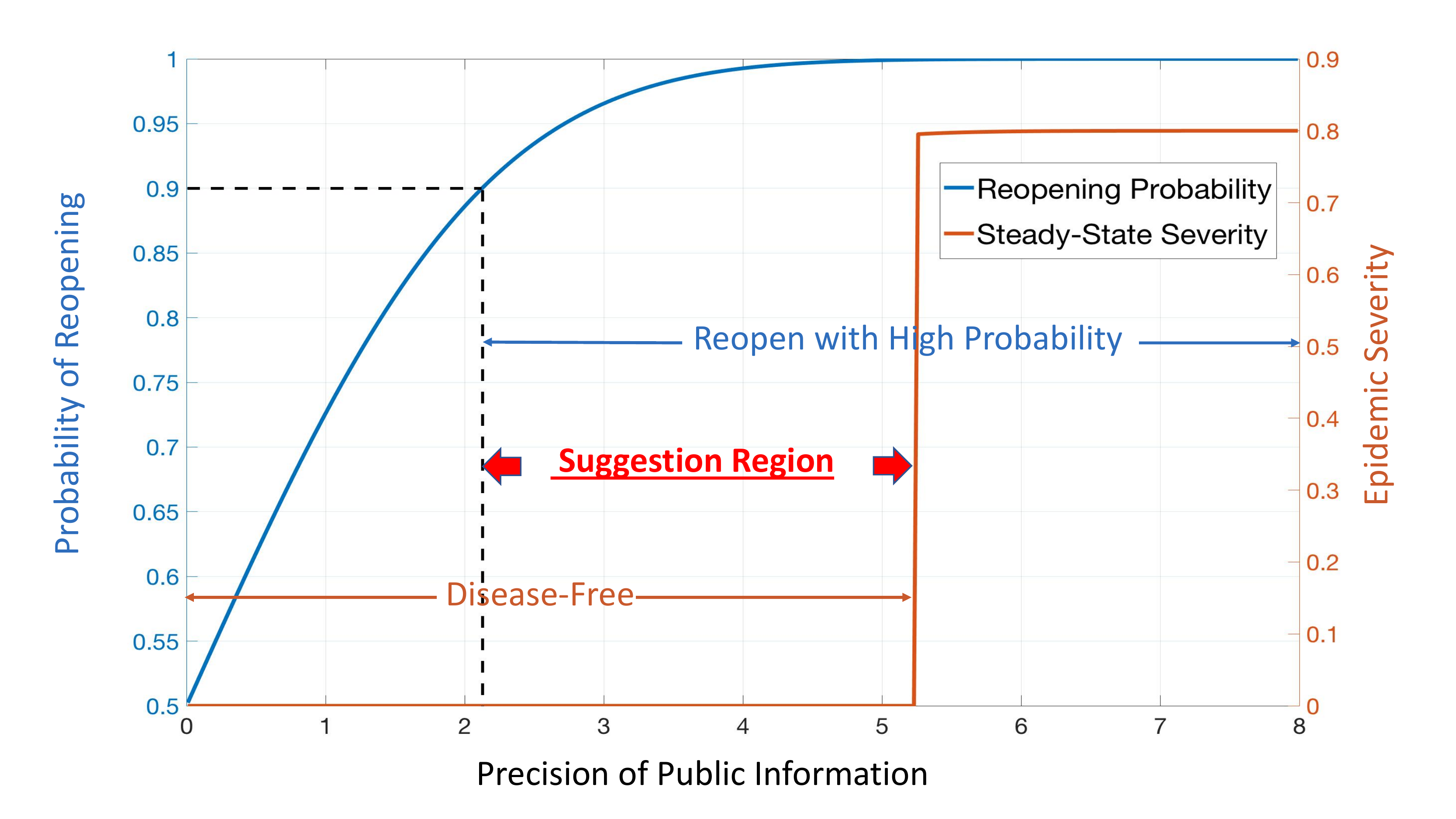}

\caption[Optional caption for list of figures]{The changes of the reopening probability and the steady-state epidemic severity when the precision of the public information varies are plotted. Let the desired probability of reopening be $0.9$ and we find the region of reopening with  high probability. Aiming to eliminate the virus, we find the region of disease-free steady state. The suggested region of the precision of the public information is the intersection of these two regions.} 
\label{fig:exp}
\end{figure}

In Fig. \ref{fig:exp}, we plot two indices for a given reopening plan.
The first index, the reopening probability, increases monotonically as the precision of public information becomes higher.
The second index, the epidemic severity at the steady state, is a piece-wise curve increasing in the precision of public information.
Suppose that the authority wants a high probability (chosen as $0.9$ in Fig. \ref{fig:exp}) for reopening so that the city has a higher chance to resume its activities.
Then, the desired public information lies above the value of the precision of public information which yields the reopening probability of $0.9$ (shown using the dashed lines).
Meanwhile, the way to control the outbreaks and the spreading of the virus affects the reopening plans.
Hence, the authority keeps the public information in the disease-free region.
As a consequence, our framework suggests a region of the precision of public information obtained at the intersection of the region of reopen with high probability and the disease-free region.
Within this suggested region, 
condition (\ref{eq:condition on sigma*mu}) is satisfied and the disease-free epidemic steady state is globally asymptotically stable.
The public information in this region is also precise enough in the sense that individuals, upon receiving their private signals, have a posteriori belief that the probability of regime-switching is high.
These beliefs will lead to affirmative vaccination decisions of a large proportion of the population, leading to the reopening of the city.

\section{Conclusion}
\label{sec:conclusion}
In this paper, we have proposed a vaccination game framework called EPROACH to capture the interplay between the population-level vaccination decisions under private and public information and the regime-switching of the epidemics. This framework has been motivated by developing reopening policies for cities when an increasing number of people become vaccinated.
Through the analysis of the externalities, we have found that the self-centered individual vaccination decisions become coordinated because of their anticipation to reopening.
Leveraging the coordination effect, we have found a unique Nash equilibrium of the game.
We have characterized players' vaccination decisions and the vaccination coverage threshold for safe reopening at this Nash equilibrium.
The uniqueness of the Nash equilibrium of our game has informed the design of reopening plans with the proposed informational epidemic control method.
We have studied the impact of informational design on vaccination decisions and the epidemic consequences before and after the reopening. 
We have observed that by choosing the resolution of the information, we can nudge the population to achieve the targeted vaccination threshold.
In the numerical experiments, the informational epidemic control method has provided a suggested region of the resolution of the public signal.
In this region, the reopening probability is high and the steady-state epidemic is eliminated after the reopening.

On possible extension of this work would consider the effects of the informational designs when players make dynamic decisions.
This requires detailed description of public and private history of observations for formulating players beliefs at different stages.
Another possible extension would consider peer-to-peer interactions in a finite network.
We would need to develop new techniques to characterize the incentives of the players, since the externalities may vary among players in different neighborhoods.


\bibliographystyle{IEEEtran}
\bibliography{bibliography.bib}
\nocite{*}

\end{document}